\newcommand{\bra}[1]{\langle#1|}
\newcommand{\ket}[1]{|#1\rangle}
\newtheorem{theorem}{Theorem}[section]
\newtheorem{lemma}[theorem]{Lemma}
\newtheorem{proposition}[theorem]{Proposition}
\theoremstyle{remark}
\newtheorem{remark}[theorem]{Remark}
\theoremstyle{definition}
\theoremstyle{example}
\theoremstyle{notation}
\begin{document}
\title{Lower and upper probabilities in the distributive lattice of subsystems}            
\author{A. Vourdas}
\affiliation{Department of Computing,\\
University of Bradford, \\
Bradford BD7 1DP, United Kingdom\\
{\rm a.vourdas@bradford.ac.uk}\\
Tel.:44-1274-233950}

\begin{abstract}
The set of subsystems $\Sigma (m)$ of a finite quantum system $\Sigma (n)$ (with variables in ${\mathbb Z}(n)$)
together with logical connectives, is a distributive lattice. With regard to this lattice, the 
$\ell (m|\rho_n)={\rm Tr}[{\mathfrak P}(m)\rho _n]$ (where ${\mathfrak P}(m)$ is the projector to $\Sigma (m)$)
obeys a supermodularity inequality, and it is interpreted as a lower probability 
in the sense of the Dempster-Shafer theory, and not as a Kolmogorov probability. It is shown that the basic concepts of the Dempster-Shafer theory 
(lower and upper probabilities and the Dempster multivaluedness) are pertinent to the quantum formalism of finite systems.

\end{abstract}
\maketitle

\section{Introduction}

When we have a structure (e.g., groups in algebra), we also introduce substructures (subgroups) and study the relationship between them.
This is our motivation for the study of subsystems of a finite quantum system $\Sigma (n)$, with variables in ${\mathbb Z}(n)$.
A subsystem of $\Sigma (n)$ is a system with variables in a subgroup of ${\mathbb Z}(n)$.
The subgroups of ${\mathbb Z}(n)$ are the ${\mathbb Z}(m)$ with $m|n$, and therefore the subsystems of
$\Sigma (n)$ are the $\Sigma (m)$ with $m|n$. 
The various subsystems $\Sigma (m)$ are embedded into $\Sigma (n)$ as described below. The projectors 
${\mathfrak P}(m)$ into the subsystems $\Sigma (m)$ commute with each other, and they can be associated 
with commuting measurements.
The set of subsystems of $\Sigma (n)$ (with logical connectives) is a distributive lattice $\Lambda ({\bf \Sigma}_n)$.

Our concept `subsystem' is linked to subgroups, because
the positions in a subsystem take values in a subgroup of the group of positions of the full system.
The Hilbert space $H(m)$ of $\Sigma (m)$ is a subspace of the Hilbert space $H(n)$ of $\Sigma (n)$, 
but there is no subsystem for every subspace of $H(n)$.
Our concept `subsystem' is much stronger than the concept `subspace'.
The lattice of the subgroups of ${\mathbb Z}(n)$ is distributive, and consequently the lattice $\Lambda ({\bf \Sigma}_n)$ of subsystems of $\Sigma (n)$ is distributive.
In fact it is a Heyting algebra, because every finite distributive lattice is a Heyting algebra\cite{BIR,BIR1,BIR2,BIR3}.

Probability theory is tacitly defined with respect to a lattice, because in its axioms it needs the concepts of conjuction, disjunction and negation.
Kolmogorov probability is defined on a powerset $2^\Omega$, which is a Boolean algebra, and where
the intersection, union and complement play the role of conjunction, disjunction and negation.
Quantum logic studies the orthomodular lattice of closed subspaces of a Hilbert space \cite{LO1,LO2,LO3,LO4}, which
has various Boolean algebras as sublattices, and Kolmogorov probabilities are defined on them.

In ref\cite{vou55,vou56} we have studied a different lattice which is the distributive lattice (Heyting algebra)
of the subsystems of a finite quantum system.
We have shown that the quantum probabilities, 
$\ell (m|\rho_n)={\rm Tr}[{\mathfrak P}(m)\rho _n]$ (where $\rho _n$ is a density matrix of the system $\Sigma (n)$),
 obey the supermodularity inequality 
\begin{eqnarray}\label{pro1}
\ell(m_1\vee m_2|\rho _n)-\ell (m_1|\rho _n)-\ell (m_2|\rho _n)
+\ell(m_1\wedge m_2|\rho _n)\ge 0.
\end{eqnarray}
In contrast, Kolmogorov probabilities $q(m)$ obey the modularity equality
\begin{eqnarray}\label{pro2}
q(m_1\vee m_2)-q(m_1)-q(m_2)+q(m_1\wedge m_2)=0,
\end{eqnarray}
Only in the special case that the variables $m_1,m_2$ belong to the same chain, Eq(\ref{pro1}) is valid as an equality.
Within a chain the quantum probabilities $\ell (m|\rho_n)$ obey an equality analogous to Eq.(\ref{pro2}), i.e., they behave like
Kolmogorov probabilities.

There are many problems in Artificial Intelligence,
Operations Research, Economics, etc,  which use probabilities with the property of Eq.(\ref{pro1}). In these subjects
we have conflicting data, and concepts like imprecise probability and non-additive probability \cite{F1,F2,F3,F4,F5}, 
have been introduced 
in order to reconcile the contradictions.
Among such theories, the  Dempster-Shafer approach \cite{DS1,DS2,DS3,DS4,DS5,DS6} has been used extensively in Artificial Intelligence, 
and in this paper we use it in the formalism of finite quantum systems. The Dempster-Shafer theory 
fits very well with the
fact that the $\ell (m|\rho_n)$ obey Eq.(\ref{pro1}), but not Eq.(\ref{pro2}).
The terminology used in quantum mechanics is sometimes different from the terminology  used in the Dempster-Shafer theory, and we
provide a `translation' between the two.

In section 2, we discuss briefly, submodular and supermodular functions,
the lattice structure of the set of subgroups of ${\mathbb Z(n)}$, and capacities (a concept weaker than probability measures), in order to establish 
the notation.
In section 3 we present some aspects of the Dempster-Shafer theory (lower and upper probabilities, multivaluedness, etc) which are used later.
In section 4, we provide a deeper insight to the fact that, 
with respect to the distributive lattice $\Lambda ({\bf \Sigma }_n)$ of the subsystems of $\Sigma (n)$, the quantum probabilities $\ell (m|\rho_n)$ obey 
the inequality of Eq.(\ref{pro1}), but they do not obey the equality of Eq.(\ref{pro2}).
In section 5, we show that the properties of the quantum probabilities $\ell (m|\rho_n)$ fit very well with the Dempster-Shafer theory.
We conclude in section 6, with a discussion of our results.

\section{Preliminaries}

\begin{itemize}

\item[(1)]
$r|s$  or $r\prec s$ denotes that $r$ is a divisor of $s$.
${\rm GCD}(r,s)$ and ${\rm LCM}(r,s)$ are the greatest common divisor and least common multiplier correspondingly, of the integers $r,s$.

${\mathbb D}(n)$ is the set of divisors of $n$.
The cardinality of ${\mathbb D}(n)$ is given by the divisor function $\sigma _0(n)$.
A divisor $r$ of $n$, such that $r$ and $n/r$ are coprime, is called a Hall divisor of $n$ (a terminology inspired by group theory).

\item[(2)]
${\mathbb Z}(n)$ is the ring of integers modulo $n$.
If $m\prec n$ then  ${\mathbb Z}(m)$ is a subgroup of ${\mathbb Z}(n)$. 
${\mathbb Z}^*(n)$ is the reduced system of residues modulo $n$. It contains the invertible elements 
of ${\mathbb Z}(n)$.

\item[(3)]
A set $A$ viewed as a lattice (i.e., with the operations $\vee$ and $\wedge$) is denoted as $\Lambda (A)$.
Throughout the paper we have various lattices and for simplicity we use the same symbols $\prec$, $\wedge $, $\vee$, $\neg$,  for the `partial order',
`meet', `join' and `negation', correspondingly. 
We also use the same symbols  $\cal O$ and $\cal I$ for the smallest and greatest elements.

All our lattices are finite distributive lattices. As such they are Heyting algebras and obey the relations $a \prec \neg \neg a$ and
$a\vee \neg a\prec {\cal I}$. A Heyting algebra may have a sublattice which is a Boolean algebra,
and for its elements $\neg \neg a=a$ and $a\vee \neg a={\cal I}$.
The $a\vee \neg a ={\cal I}$ is the `law of the excluded middle',  and it is is not valid in Heyting algebras, but it is valid in Boolean algebras.
The formalism of subsystems of $\Sigma (n)$ as a Heyting algebra, and the physical meaning of the logical connectives, is discussed in \cite{vou55}. 
Here we only need a minimal amount of these ideas.

\item[(4)]
Kolmogorov probability theory is defined on the powerset $2^\Omega$ of a set $\Omega$. This is a Boolean algebra 
which has the $\subseteq$, $\cap$ and $\cup$, as the logical connectives $\prec$, $\wedge $, $\vee$, correspondingly. 
The negation $\neg A$ of a subset of $\Omega$, is the complement ${\overline A}=\Omega -A$.

\item[(5)]
Sometimes in the literature, in a bipartite system described with the tensor product $H_A\otimes H_B$, 
the term subsystem is used for each of the two parties indexed with $A,B$.
Here the term subsystem means something different.
If $\Sigma (n)$ is a system with variables in ${\mathbb Z}(n)$,
a subsystem $\Sigma (m)$ of $\Sigma (n)$, is a system with variables in a subgroup of ${\mathbb Z}(n)$.
And there is an embedding of the subsystem $\Sigma (m)$ into $\Sigma (n)$, which is described explicitly below.
\end{itemize}

\subsection{Submodular and supermodular functions}

Let $f(m)$ be a function on a distributive lattice $\Lambda$ and 
\begin{eqnarray}\label{h1}
F(m_1,m_2)=f(m_1\vee m_2)-f(m_1)-f(m_2)+f(m_1\wedge m_2);\;\;\;\;m_i\in {\Lambda}.
\end{eqnarray}
$f(m)$ is supermodular, modular or submodular, if for all $m_1,m_2$, we get
$F(m_1,m_2)\ge 0$, $F(m_1,m_2)= 0$, $F(m_1,m_2)\le 0$, correspondingly.
We note that if $m_1\prec m_2$ or $m_1=\cal O$ or $m_1=\cal I$, then $F(m_1,m_2)=0$.
Supermodular and submodular functions have been studied and used in Optimization and Operations Research \cite{F}. 
Their properties are related to the fact that $F(m_1,m_2)$ can be viewed as a `discretized second derivative'.

For functions such that $f({\cal O})=0$, and for $m_1\wedge m_2={\cal O}$, Eq.(\ref{h1}) gives 
\begin{eqnarray}\label{h2}
F(m_1,m_2)=f(m_1\vee m_2)-f(m_1)-f(m_2), 
\end{eqnarray}
and supermodularity, modularity or submodularity, reduces to superadditivity, additivity or subadditivity, correspondingly.

\subsection{The lattice $\Lambda [{\mathbb D}(n)]$ of divisors of $n$}

We factorize the number $n$ in terms of prime numbers as
\begin{eqnarray}
n=\prod _{p\in \varpi (n)}p^{e_p(n)}
\end{eqnarray}
Here $\varpi (n)$ is the set of prime numbers in this factorization, and $e_p(n)$ is the exponent of $p$.
If $k\in {\mathbb D}(n)$ then 
\begin{eqnarray}
k=\prod _{p\in \varpi (k)}p^{e_p(k)};\;\;\;\;\;\varpi (k)\subset \varpi (n);\;\;\;\;\;e_p(k)\le e_p(n)
\end{eqnarray}
The set ${\mathbb D}(n)$ with divisibility as partial order, and with
\begin{eqnarray}
k\wedge m={\rm GCD}(k,m);\;\;\;\;\;k\vee m={\rm LCM}(k,m);\;\;\;\;\;\neg k=\prod _{p\in \varpi (n)-\varpi(k)}p^{e_p(n)}
\end{eqnarray}
is a finite distributive lattice and as such it is a Heyting algebra with ${\cal O}=1$ and ${\cal I}=n$. 
We denote it as $\Lambda [{\mathbb D}(n)]$.
$\neg k$ is the largest element of ${\mathbb D}(n)$ such that $k\wedge (\neg k)=1$.
The following subset of ${\mathbb D}(n)$
\begin{eqnarray}
{\mathbb D}^B(n)=\left \{\prod _{p\in \pi}p^{e_p(n)}\;|\;\pi \subseteq \varpi(n)\right \},
\end{eqnarray}
contains the Hall divisors of $n$, and it is a Boolean algebra. If all exponents $e_p(n)=1$, then ${\mathbb D}^B(n)={\mathbb D}(n)$.

\subsection{The lattice $\Lambda [{\mathfrak Z}(n)]$ of subgroups of ${\mathbb Z}(n)$}\label{AAS}

We consider the set 
\begin{eqnarray}
{\mathfrak Z}(n)=\{{\mathbb Z}(m)\;|\;m\in {\mathbb D}(n)\};\;\;\;\;\;n\in{\mathbb N},
\end{eqnarray}
which contains the subgroups of ${\mathbb Z}(n)$.
A subgroup ${\mathbb Z}(m)$ can be embedded into a larger group ${\mathbb Z}(k)$ (where $m\prec k \prec n$), with the map
\begin{eqnarray}\label{b5}
{\mathbb Z}(m)\ni a\;\;\rightarrow\;\;da \in {\mathbb Z}(k);\;\;\;\;d=\frac{k}{m}
\end{eqnarray} 

The ${\mathfrak Z}(n)$ with subgroup as partial order, and with
\begin{eqnarray}
{\mathbb Z}(k\wedge m)={\mathbb Z}(k)\wedge {\mathbb Z}(m);\;\;\;\;\;{\mathbb Z}(k\vee m)={\mathbb Z}(k)\vee {\mathbb Z}(m)
;\;\;\;\;\;\neg {\mathbb Z}(k)={\mathbb Z}(\neg k)
\end{eqnarray}
is a Heyting algebra with 
${\cal O}={\mathbb Z}(1)$ and ${\cal I}={\mathbb Z}(n)$.
It is isomorphic to $\Lambda [{\mathbb D}(n)]$ and we denote it as $\Lambda [{\mathfrak Z}(n)]$.

${\mathbb Z}(k\vee m)$ is the smallest group in ${\mathfrak Z}(n)$ which has the ${\mathbb Z}(k)$ and ${\mathbb Z}(m)$
as subgroups. Taking into account the map of Eq.(\ref{b5}), we see that ${\mathbb Z}(k\vee m)$ contains all the elements of both 
${\mathbb Z}(k)$ and ${\mathbb Z}(m)$, and also the elements of ${\mathbb Z}^*(k\vee m)$, which
as explained later, cause the supermodularity of $\ell (m|\rho _n)$, in Eq.(\ref{pro1}).
${\mathbb Z}(k\wedge m)$ is the largest subgroup of both ${\mathbb Z}(k)$ and ${\mathbb Z}(m)$.
$\neg {\mathbb Z}(k)$ is the largest group in ${\mathfrak Z}(n)$ such that $[\neg {\mathbb Z}(k)]\wedge {\mathbb Z}(k)={\mathbb Z}(1)$.

The subset of $\Lambda [{\mathfrak Z}(n)]$, given by
\begin{eqnarray}
\Lambda [{\mathfrak Z}^B(n)]=\left \{{\mathbb Z}\left (\prod _{p\in \pi}p^{e_p(n)}\right )\;|\;\pi \subseteq \varpi(n)\right \},
\end{eqnarray}
is a Boolean algebra.

\subsection{ Capacities or non-additive probabilities}

Sometimes there is added value in a coalition (e.g., in the merger of two companies).
In everyday language this is described with the expression `one plus one is three', or the expression
`the whole is greater than the sum of its parts'. 
Of course, the added value can be negative.
In such cases probability is not additive. The term 
capacity is used for non-additive probabilities (e.g., \cite{CC1,CC2}).

Let $2^\Omega$ be the powerset of a set $\Omega$, which in our case is finite.
A capacity or non-additive probability in $\Omega$, is a function $\mu$ from $2^\Omega$ to $[0,1]$, such that
\begin{eqnarray}
&&\mu (\emptyset )=0;\;\;\;\;\;\mu (\Omega) =1\\
&&A\subseteq B \subseteq \Omega \;\;\rightarrow \mu (A)\le \mu (B)\label{BB}
\end{eqnarray}
If we replace the monotonicity relation of Eq.(\ref{BB}) with the additivity property 
\begin{eqnarray}
&&A\cap B=\emptyset\;\;\rightarrow \mu (A\cup B)=\mu(A)+\mu (B)\label{AA}
\end{eqnarray}
which is stronger, we get a probability measure.

Let $\Omega =\{1,...,n\}$. The non-additivity of capacities, i.e., the fact that in general $\mu(\{i,j\})-\mu (\{j\})\ne \mu (\{i\})$ 
implies that $\mu (\{i\})$ is not a good estimate of the weight (or importance) of the element $i$, in the formalism.
The 
\begin{eqnarray}\label{C}
{\mathfrak W}(i|B)=\mu (B\cup \{i\})-\mu (B)
\end{eqnarray}
is the added value to the weight (or importance) of the element $i$ when it is in a coalition with the subset $B\subseteq \Omega$.
We can introduce a weighted average of these quantities as a measure of the overall importance of the element $i$.
This can be done in various ways.  Shapley \cite{CC2,S} introduced one of them, which is used in game theory, 
as a measure of the importance of each player within a coalition. We do not pursue further this direction.

\section{Multivaluedness and the Dempster-Shafer theory}\label{A}

Let $X$ be a sample space, and $\Gamma$ a multivalued map from $X$ to another sample space $\Omega$.
The Dempster-Shafer theory carries probabilities on subsets of $X$, into lower and upper probabilities on subsets of $\Omega$.
The need for lower and upper probabilities arises from the fact that $\Gamma$ is not single-valued.
In this case we have an ambiguity in the probability, which is expressed with the interval from the lower to the upper probability.
We first give an example, and then summarize the properties of the upper and lower probabilities, and compare them with those of Kolmogorov probabilities.
The analogues of these ideas for the quantum formalism are discussed explicitly, later.

\subsection{Uncertainty and ambiguity: an example}\label{exa}

A class has $n$ students $S_1,...,S_n$. An attribute for each student $S_i$ is not accurately known, but it takes
values in a set $G_i$ of integer numbers. 
For example, each student does a `final year project', and each project is assessed independently by many professors.
The set $G_i$ contains all the marks (integers in the interval $[0,100]$)
for the project of the student $S_i$.
The number of professors assessing each project may vary from one student to another, i.e., the various sets $G_i$ have different cardinalities, in general.
We have a multivalued map, where to each student corresponds a set of marks. 
We calculate the probability that a random student has marks within a given set $A$ (e.g., above $70$).

There are three categories of students. 
\begin{itemize}
\item
The first category contains
$n_1$ students such that $G_i\subseteq A$. 
For these students we are certain that their marks belong to the set $A$. 
\item
The second category contains $n_2$ students such that 
$G_i\cap A\ne \emptyset$ and also $G_i\cap {\overline A}\ne \emptyset$. 
For these students, some (but not all) of their marks belong to $A$.
Dempster \cite{DS2} uses the term `don't know' for this category.

\item
The third category contains the $n-n_1-n_2$ students such that $G_i\subseteq {\overline A}$. 
For these students we are certain that the marks do not belong to the set $A$. 
\end{itemize}
The $n_1+n_2$ students in the first two categories, can be described collectively by saying it is not true that $G_i\subseteq {\overline A}$. So the $G_i\subseteq A$ is not equivalent to the negation of $G_i\subseteq {\overline A}$.  

The lower probability or belief is $\ell(A)=n_1/n$, and is associated with the statement $G_i\subseteq A$.
The upper probability or plausibility is $u(A)=(n_1+n_2)/n$, and is associated with the negation of the statement $G_i\subseteq {\overline A}$.
The lower (upper) probability, simply excludes (includes) all the `don't know' cases.

Table \ref{360}, shows the marks for the projects of four students (ideally we should have an example with large $n$, but for practical reasons we take $n=4$). Table \ref{361}, shows 
the lower and upper probabilities $\ell(A_i)$ and $u(A_i)$ for the sets
\begin{eqnarray}\label{79}
&&A_1=\{m\;|\;60\le m\le 69\};\;\;\;\;A_2=\{m\;|\;70\le m\le 100\}\nonumber\\
&&A_1\cup A_2=\{m\;|\;60\le m\le 100\};\;\;\;\;\;\;A_3=\{m\;|\;65\le m\le 75\}.
\end{eqnarray}

There is much discussion in the literature about the normalization of the probabilities, in the case that some of the sets $G_i$ are empty,
i.e., some students have not been assessed. 
We do not consider this case, i.e., we assume that all sets $G_i$ with $i=1,...,n$, are non-empty.  
The above example is similar to the one in ref.\cite{DS6}, where a company does not know the age of its employees.
Several of its managers provide an estimate for the age of each employee, and this corresponds to the grades for each project, in our example.

For each student $S_i$ we choose one mark $a_i\in G_i$, and then we have the single-valued map 
which we denote as $\Gamma_{\nu}$. There are clearly many such maps indexed with $\nu$, and the mark of 
each student depends on the choice of $\Gamma_{\nu}$.
With the single-valued map $\Gamma_{\nu}$, let $k_{\nu}$ be the number of students with mark in the set $A$.
Then the probability 
that a random student has a mark within the set $A$, is simply $q_{\nu}(A)=k_{\nu}/n$. For any $\Gamma_{\nu}$, we get
$n_1\le k_{\nu}\le n_2$ and therefore $\ell (A)\le q_{\nu}(A) \le u(A)$.

There are two different kinds of indeterminateness in these examples.
The first is associated to probabilities $q_{\nu}(A)$ with fixed $\nu$.
The second is related to the fact that we have many $\nu$, and therefore many $q_{\nu}(A)$ for a fixed set $A$.
In order to distinguish them, we need two different terms, and following ref\cite{F2}, we call the former 
uncertainty and the latter ambiguity. Ambiguity is intimately related to the multivaluedness, and it refers to the fact that we have
an interval $[\ell(A),u(A)]$ of probabilities, rather than a single probability.
In the case of singlevaluedness (i.e., if we have a single grade for each project), $\ell(A)=u(A)$, and there is no ambiguity.

\subsection{Properties of lower and upper probabilities}\label{lu}

Let $A,B$ be elements of the powerset $2^{\Omega}$ (i.e., subsets of $\Omega$).
Kolmogorov's probability obeys the relations
\begin{eqnarray}
&&q (\emptyset)=0;\;\;\;\;q (\Omega)=1\\
&&q (A\cup B)-q (A)-q (B)+q (A\cap B)= 0,\label{76b}
\end{eqnarray}
and from this follows that 
\begin{eqnarray}
&&A \subseteq B\;\;\rightarrow\;\;q(A)\le q (B)\\
&&q(A)+q({\overline A})=1\label{21}.
\end{eqnarray}

The lower probability or belief $\ell (A)$, is a monotone function from $2^{\Omega}$ to $[0,1]$, i.e., 
\begin{eqnarray}\label{28A}
A \subseteq B\;\;\rightarrow\;\;\ell (A)\le \ell (B),
\end{eqnarray}
and it obeys the relations:
\begin{eqnarray}
&&\ell (\emptyset)=0;\;\;\;\;\ell (\Omega)=1\label{74}\\
&&\ell (A\cup B)-\ell (A)-\ell (B)+\ell (A\cap B)\ge0.\label{76}
\end{eqnarray}
From this follows that 
\begin{eqnarray}\label{1z}
\ell (\overline A)+\ell (A)\le 1
\end{eqnarray}
For Kolmogorov probabilities $1-q({\overline A})=q(A)$, but for lower probabilities
the $1-\ell (\overline A)$ is a different from $\ell (A)$, and we call it upper probability or plausibility $u(A)$:
\begin{eqnarray}\label{35}
u(A)=1-\ell (\overline A)\ge \ell (A)
\end{eqnarray}
Using the equations above, we prove that
\begin{eqnarray}
&&u (\emptyset)=0;\;\;\;\;u (\Omega)=1\\
&&A \subseteq B\;\;\rightarrow\;\;u (A)\le u (B)\label{34A}\\
&&u (A\cup B)-u (A)-u (B)+u (A\cap B)\le 0.\label{76a}
\end{eqnarray}
From Eqs (\ref{76}), (\ref{76a}) with $A\cap B=\emptyset$, it follows that both the lower probability and the upper probability are capacities.

The difference between the upper and lower probabilities, describes the `don't know' case:
\begin{eqnarray}
d(A)=u(A)-\ell (A)=1-\ell (A)-{ \ell } ( \overline A).
\end{eqnarray}
The upper probability combines the `true' and the `don't know'. 
Ref.\cite{DS2} discusses the importance of the `don't know' category.

\begin{remark}
In this section we have defined lower and upper probabilities on a powerset $2^\Omega$, which is a Boolean algebra.
Therefore ${\overline {\overline A}}=A$, which implies that $\ell ({\overline {\overline A}})=\ell (A)$ and $u ({\overline {\overline A}})=u (A)$. 
Below we will define lower and upper probabilities on a Heyting algebra, where $a \prec \neg\neg a$.
\end{remark}

\section{Subsystems of $\Sigma(n)$}

\subsection{Embedding of subsystems}

$\Sigma (n)$ is a quantum system with positions and momenta in ${\mathbb Z}(n)$, and $n$-dimensional
Hilbert space $H(n)$.
$|X_n;r\rangle$ where $r\in {\mathbb Z}(n)$, is an orthonormal basis that we call `basis of position states'
(the $X_n$ in this notation is not a variable, but it simply indicates that they are position states).
Through a Fourier transform we get another orthonormal basis that we call momentum states\cite{FI1}:
\begin{equation}
|{P_n};r\rangle=F_n|{X_n};r\rangle;\;\;\;\;
F_n=n^{-1/2}\sum _{r,s}\omega _n(rs)\ket{X_n;r}\bra{X_n;s};\;\;\;\;
\omega_n(r)=\exp \left (i\frac{2\pi r}{n}\right )
\end{equation}
For $m\prec k\prec n$, the $\Sigma (m)$ is a subsystem of $\Sigma (k)$
(which we denote as $\Sigma (m)\prec \Sigma (k)$), and
the space $H(m)$ is a subspace of $H(k)$ (which we denote as $H(m)\prec H(k)$).
We can embed the states of  $\Sigma (m)$ into $\Sigma (k)$, as follows:
\begin{eqnarray}\label{56}
{\cal A}_{mk}:\;\;\sum _{r=0}^{m-1}a_r\ket{X_m;r}\;\;\rightarrow\;\;\sum _{r=0}^{m-1}a_r\ket{X_k;\frac{kr}{m}};\;\;\;\;m\prec k.
\end{eqnarray}
The system $\Sigma (1)$ is 
physically trivial, as it has one-dimensional Hilbert space $H(1)$ which consists of the `vacuum' state $\ket{X_1;0}=\ket{P_1;0}$.

We define the projector
\begin{eqnarray}\label{oper}
{\mathfrak P}(m)=\sum _{r=0}^{m-1}\ket {X_{k};\frac{kr}{m}}\bra{X_{k};\frac{kr}{m}};\;\;\;\;m\prec k;\;\;\;\;m,k\in {\mathbb D}(n).
\end{eqnarray}
The map of Eq.(\ref{56}), which identifies the state $\ket {X_{m};r}$ in $H(m)$ with the state $\ket {X_{k};\frac{kr}{m}}$ in $H(k)$, 
implies that we do not need to use an index $k$ to denote this projector as ${\mathfrak P}_k(m)$.
Also ${\Sigma }(n)$ is the largest system, and therefore
${\mathfrak P}(n)={\bf 1}_n$. All these projectors commute with each other.

\subsection{The lattice $\Lambda ({\bf \Sigma}_n )$ of subsystems}

Let ${\bf \Sigma}_n$ be the set of subsystems of $\Sigma (n)$ and ${\bf H}_n$ the set of their Hilbert spaces:
\begin{eqnarray}
&&{\bf \Sigma}_n=\{\Sigma (m)\;|\;m\in {\mathbb D}(n)\}\nonumber\\
&&{\bf H}_n=\{H(m)\;|\;m\in {\mathbb D}(n)\}.
\end{eqnarray}
${\bf \Sigma}_n$ is a partially ordered set with partial order `subsystem'.
${\bf H}_n$ is a partially ordered set with partial order `subspace'.

The set ${\bf \Sigma}_n $  with 
\begin{eqnarray}\label{123}
&&{\Sigma}(m)\wedge {\Sigma}(k)=
{\Sigma}(m \wedge k)\nonumber\\
&&\Sigma (m)\vee \Sigma (k)=\Sigma (m\vee k)\nonumber\\
&&\neg \Sigma (m)=\Sigma (\neg m);\;\;\;\;m,k\in {\mathbb D}(n)
\end{eqnarray}
is a distributive lattice (Heyting algebra) with ${\cal O}=\Sigma (1)$ and ${\cal I}=\Sigma (n)$.
It is isomorphic to $\Lambda [{\mathbb D}(n)]$ and we denote it as $\Lambda({\bf \Sigma}_n)$.
The physical meaning of the connectives is (see also ref\cite{vou55})
\begin{itemize}
\item
$\Sigma (m)\vee \Sigma (k)$ is the smallest subsystem that contains both $\Sigma (m)$ and $\Sigma (k)$, and in this sense 
it is the `coalition' or `merger' of these subsystems (they are terms used in the literature on the Dempster-Shafer theory).

\item
$\Sigma (m)\wedge \Sigma (k)$ is the largest common subsystem of  $\Sigma (m)$ and $\Sigma (k)$.

\item
$\neg \Sigma (m)$ is the maximal subsystem in $\Lambda({\bf \Sigma}_n)$ such that
$[\neg \Sigma (m)]\wedge \Sigma (m)=\Sigma (1)$. 
The $\neg \Sigma (m)$ and $\Sigma (m)$ share only the lowest state
$\ket{X_1;0}$. 

\end{itemize}

In analogous way we define the logical operations in ${\bf H}_n$, which 
is a Heyting algebra isomorphic to 
$\Lambda [{\mathbb D}(n)]$ and $\Lambda ({\bf \Sigma}_n)$, and we denote it as $\Lambda ({\bf H}_n)$.

All logical operations are linked to commuting von Neumann measurements.
For $m,k\in {\mathbb D}(n)$,
the ${\mathfrak P}(m\vee k)$ and ${\mathfrak P}(m\wedge k)$ are projectors to the spaces of 
the systems ${\Sigma}(m)\vee {\Sigma}(k)$
and ${\Sigma}(m)\wedge {\Sigma}(k)$, correspondingly.
Starting from a state of ${\Sigma}(n)$,
with these projectors we can get states in ${\Sigma}(m)\vee {\Sigma}(k)$
and ${\Sigma}(m)\wedge {\Sigma}(k)$.
Also the ${\mathfrak P}(\neg m)$ is the projector
to the space of the systems$\neg {\Sigma}(m)$.

The following proposition is a summary of results proved in \cite{vou55} and we give it without proof:
\begin{proposition}
For variables in ${\mathbb D}(n)$:
\begin{itemize}
\item[(1)]
\begin{eqnarray}\label{61}
&&H(m\wedge k)=H(m)\cap H(k)\nonumber\\
&&{\mathfrak P}(m\wedge k)={\mathfrak P}(m){\mathfrak P}(k)\nonumber\\
&&{\mathfrak P}(m){\mathfrak P}(\neg m)={\mathfrak P}(1)
\end{eqnarray}
\item[(2)]
\begin{eqnarray}\label{1000}
H(m_1\vee m_2)=T(m_1,m_2)\oplus S(m_1,m_2).
\end{eqnarray}
The space $T(m_1,m_2)$ 
\begin{eqnarray}\label{e1}
T(m_1,m_2)={\rm span}[H(m_1)\cup  H(m_2)]
\end{eqnarray}
contains all superpositions of states in $H(m_1)$ and $H(m_2)$, and it is a subspace of the space $H(m_1\vee m_2)$. 
The space $S(m_1,m_2)$ is orthogonal to the space $T(m_1,m_2)$ and contains disjunctions which are not superpositions.
The 
\begin{eqnarray}\label{r1}
{\mathfrak T}(m_1,m_2)={\mathfrak P}(m_1)+{\mathfrak P}(m_2)-{\mathfrak P}(m_1\wedge m_2)
\end{eqnarray}
is projector to the space $T(m_1,m_2)$, and the
\begin{eqnarray}\label{50Z}
{\mathfrak S}(m_1,m_2)&=&{\mathfrak P}(m_1\vee m_2)-{\mathfrak T}(m_1,m_2)
\nonumber\\&=&{\mathfrak P}(m_1\vee m_2)-{\mathfrak P}(m_1)-{\mathfrak P}(m_2)
+{\mathfrak P}(m_1\wedge m_2),
\end{eqnarray}
is projector to the space $S(m_1,m_2)$.
The dimension of the space $S(m_1,m_2)$ is $m_1\vee m_2-m_1-m_2+m_1\wedge m_2$.
In the special case that $m_1,m_2$ belong to the same chain, the dimension of the space $S(m_1,m_2)$ is $0$.
\end{itemize}
\end{proposition}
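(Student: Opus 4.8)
The plan is to work throughout in the ambient space $H(n)$, using the embedding of Eq.(\ref{56}) with the target $k=n$, and to exploit the fact that every ${\mathfrak P}(m)$ is \emph{diagonal} in the position basis $\{\ket{X_n;j}\}$. Concretely, from Eq.(\ref{oper}) the space $H(m)$ is spanned by those position states $\ket{X_n;j}$ whose label $j$ is a multiple of $n/m$; writing $P_m=\{\,j\in{\mathbb Z}(n) : (n/m)\mid j\,\}$ for this support set, we have ${\mathfrak P}(m)=\sum_{j\in P_m}\ket{X_n;j}\bra{X_n;j}$ and $|P_m|=m$. The point of this framing is that, since all the ${\mathfrak P}(m)$ are simultaneously diagonal and hence commute, every operator identity in the proposition collapses to a set identity about the $P_m$: products correspond to intersections and differences to set complements.

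For part (1), the one number-theoretic fact I need is $P_m\cap P_k=P_{m\wedge k}$, i.e. ${\rm LCM}(n/m,n/k)=n/{\rm GCD}(m,k)$, which follows from ${\rm GCD}(n/m,n/k)=n/{\rm LCM}(m,k)$ together with ${\rm LCM}(m,k){\rm GCD}(m,k)=mk$. Granting this, $H(m)\cap H(k)$ is the span of the position states labelled by $P_m\cap P_k=P_{m\wedge k}$, which is exactly $H(m\wedge k)$; and because the two diagonal projectors commute, ${\mathfrak P}(m){\mathfrak P}(k)$ is the diagonal projector onto $P_m\cap P_k$, hence equals ${\mathfrak P}(m\wedge k)$. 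The third identity is the special case $k=\neg m$: by definition $\neg m$ and $m$ share no prime factor, so $m\wedge\neg m={\rm GCD}(m,\neg m)=1$ and ${\mathfrak P}(m){\mathfrak P}(\neg m)={\mathfrak P}(1)$.

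For part (2), I first note $P_{m_1}\cup P_{m_2}\subseteq P_{m_1\vee m_2}$, since $m_i\mid(m_1\vee m_2)$ forces $(n/(m_1\vee m_2))\mid(n/m_i)$ and hence $P_{m_i}\subseteq P_{m_1\vee m_2}$. Evaluating ${\mathfrak T}(m_1,m_2)={\mathfrak P}(m_1)+{\mathfrak P}(m_2)-{\mathfrak P}(m_1\wedge m_2)$ on the diagonal and using $P_{m_1}\cap P_{m_2}=P_{m_1\wedge m_2}$ from part (1), an inclusion--exclusion check shows each diagonal entry equals $1$ on $P_{m_1}\cup P_{m_2}$ and $0$ elsewhere; thus ${\mathfrak T}$ is a genuine projector whose range is the span of the position states in $P_{m_1}\cup P_{m_2}$, which is precisely $T(m_1,m_2)={\rm span}[H(m_1)\cup H(m_2)]$. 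Because this support lies inside $P_{m_1\vee m_2}$, the difference ${\mathfrak S}={\mathfrak P}(m_1\vee m_2)-{\mathfrak T}$ is again a diagonal projector, onto the complementary position states $P_{m_1\vee m_2}\setminus(P_{m_1}\cup P_{m_2})$; its range $S(m_1,m_2)$ is spanned by basis vectors disjoint from those spanning $T$, which gives orthogonality and the decomposition $H(m_1\vee m_2)=T\oplus S$ of Eq.(\ref{1000}).

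The dimension count is then pure bookkeeping: $\dim S=|P_{m_1\vee m_2}|-|P_{m_1}\cup P_{m_2}|=|P_{m_1\vee m_2}|-|P_{m_1}|-|P_{m_2}|+|P_{m_1}\cap P_{m_2}|$, and substituting $|P_m|=m$ and $|P_{m_1}\cap P_{m_2}|=|P_{m_1\wedge m_2}|=m_1\wedge m_2$ yields $m_1\vee m_2-m_1-m_2+m_1\wedge m_2$. If $m_1,m_2$ lie in a common chain, say $m_1\mid m_2$, then $m_1\wedge m_2=m_1$ and $m_1\vee m_2=m_2$, so this vanishes. The only place where anything beyond bookkeeping enters is the identity ${\rm LCM}(n/m,n/k)=n/{\rm GCD}(m,k)$ and the verification that ${\mathfrak T}$ has spectrum in $\{0,1\}$ --- which is exactly where the \emph{equality} $P_{m_1}\cap P_{m_2}=P_{m_1\wedge m_2}$, rather than mere containment, is essential; everything else is forced once the projectors are recognized as simultaneously diagonal in the position basis.
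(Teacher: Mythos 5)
Your proof is correct and complete. Note that the paper itself states this proposition \emph{without} proof, deferring to reference [9] ([vou55]), so there is no in-paper argument to compare against; your write-up actually supplies the missing justification. The route you take --- observing that every ${\mathfrak P}(m)$ is diagonal in the position basis with support set $P_m=\{j:(n/m)\mid j\}$, so that all the operator identities reduce to set identities among the $P_m$ --- is the natural one, and it is consistent with how the paper itself computes in the $n=18$ example of section 4.3, where the quantities $\sigma(m_1,m_2|\rho)$ are read off by exactly this kind of index bookkeeping. The two load-bearing steps are correctly identified and correctly handled: the equality $P_m\cap P_k=P_{m\wedge k}$ (equivalently ${\rm LCM}(n/m,n/k)=n/{\rm GCD}(m,k)$), which drives all of part (1) and the inclusion--exclusion verification that ${\mathfrak T}$ has diagonal entries in $\{0,1\}$; and the containment $P_{m_1}\cup P_{m_2}\subseteq P_{m_1\vee m_2}$, which makes ${\mathfrak S}$ a genuine projector and yields the orthogonal decomposition and the dimension count. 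One implicit step worth making explicit if you polish this: the identification $H(m)\cap H(k)={\rm span}\{\ket{X_n;j}:j\in P_m\cap P_k\}$ uses the fact that the $P_m$ index subsets of a \emph{single} basis of $H(n)$, for which the intersection of spans equals the span of the intersection --- this is exactly where the compatibility of the embeddings ${\cal A}_{mn}$ is being used.
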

\begin{remark}\label{vb}
The ${\rm span}[H(m_1)\cup  H(m_2)]$ contains superpositions of states $\ket{X_{m_i}; a_i}$ where $a_i\in {\mathbb Z}(m_i)$
and $i=1,2$, which when embedded into $H(m_1\vee m_2)$, become $\ket{X_{m_1\vee m_2}; d_ia_i}$
where $d_i=(m_1\vee m_2)/m_i$ (Eq.(\ref{56})).
The disjunction (`merger') $\Sigma (m_1)\vee \Sigma (m_2)$, of two subsystems $\Sigma (m_1)$ and $\Sigma (m_2)$,
is described with the space $H(m_1\vee m_2)$ which is larger than the ${\rm span}[H(m_1)\cup  H(m_2)]$, and it contains 
additional states $\ket{X_{m_1\vee m_2}; a}$ with $a\in {\mathbb Z}^*(m_1\vee m_2)$, which span the space $S(m_1,m_2)$.
This is related to the fact that the group ${\mathbb Z}(m_1 \vee m_2)$ contains the elements of both 
${\mathbb Z}(m_1)$ and ${\mathbb Z}(m_2)$, plus the elements of ${\mathbb Z}^*(m_1\vee m_2)$, as explained in section \ref{AAS}.
Later, we will see the link between the space $S(m_1,m_2)$, with the supermodularity of $\ell (m|\rho _n)$ in Eq.(\ref{pro1}).
\end{remark}

We consider the system ${\Sigma}(n)$ in a state described with the density matrix $\rho_n$, and define the
\begin{eqnarray}\label{meas}
\ell (m|\rho _n)={\rm Tr}[\rho_n{\mathfrak P}(m)];\;\;\;\;\;\sigma (m_1,m_2|\rho _n)={\rm Tr}[\rho _n{\mathfrak S}(m_1,m_2)]
;\;\;\;\;\;m,m_1,m_2\in {\mathbb D}(n),
\end{eqnarray}
We also exclude the lowest state from ${\mathfrak P}(m)$ and define the 
\begin{eqnarray}
{\widetilde {\mathfrak P}}(m)={\mathfrak P}(m)-{\mathfrak P}(1);\;\;\;\;\;
{\widetilde \ell} (m|\rho _n)={\rm Tr}[\rho_n{\widetilde {\mathfrak P}}(m)].
\end{eqnarray}
We will use the notation ${\widetilde \Sigma} (m)$, for the subsystem $\Sigma (m)$ when we calculate probabilities excluding the lowest state.
With this notation, the probabilities in
$\Sigma (m)$ and ${\widetilde \Sigma} (\neg m)$ contain complementary information.
\begin{lemma}\label{37}
If $m\prec k$ then $\ell (m|\rho _n)\le \ell (k|\rho _n)$.
\end{lemma}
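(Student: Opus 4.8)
The plan is to prove the monotonicity directly from operator positivity, rather than from the supermodularity inequality (\ref{pro1}). The key observation is that $m\prec k$ forces the range of $\mathfrak{P}(m)$ to sit inside the range of $\mathfrak{P}(k)$, so that $\mathfrak{P}(k)-\mathfrak{P}(m)$ is again an orthogonal projector and hence positive semidefinite; taking the expectation in the state $\rho_n$ then gives the result at once.

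First I would record the ordering of the two projectors. Since $m$ divides $k$ we have $m\wedge k=\mathrm{GCD}(m,k)=m$, and the first part of the Proposition (Eq.(\ref{61})) yields
\begin{eqnarray}\label{monA}
\mathfrak{P}(m)\mathfrak{P}(k)=\mathfrak{P}(m\wedge k)=\mathfrak{P}(m).
\end{eqnarray}
Because all the projectors $\mathfrak{P}(\cdot)$ commute, the same identity holds with the factors reversed. Geometrically this just expresses $H(m)\prec H(k)$, so that $\mathfrak{P}(m)$ projects onto a subspace of the range of $\mathfrak{P}(k)$.

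Next I would check that $Q=\mathfrak{P}(k)-\mathfrak{P}(m)$ is an orthogonal projector. It is manifestly Hermitian, and using (\ref{monA}) together with the idempotency $\mathfrak{P}(m)^2=\mathfrak{P}(m)$ and $\mathfrak{P}(k)^2=\mathfrak{P}(k)$,
\begin{eqnarray}\label{monB}
Q^2=\mathfrak{P}(k)-\mathfrak{P}(k)\mathfrak{P}(m)-\mathfrak{P}(m)\mathfrak{P}(k)+\mathfrak{P}(m)=\mathfrak{P}(k)-\mathfrak{P}(m)=Q.
\end{eqnarray}
Thus $Q$ is the projector onto the orthogonal complement of $H(m)$ inside $H(k)$, and in particular it is positive semidefinite.

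Finally, $\rho_n$ is a density matrix, hence positive semidefinite, and the trace of the product of two positive semidefinite operators is nonnegative. Therefore
\begin{eqnarray}\label{monC}
\ell(k|\rho_n)-\ell(m|\rho_n)={\rm Tr}[\rho_n\,\mathfrak{P}(k)]-{\rm Tr}[\rho_n\,\mathfrak{P}(m)]={\rm Tr}[\rho_n\,Q]\ge 0,
\end{eqnarray}
which is the assertion. I do not anticipate a genuine obstacle here: the whole argument rests on the projector relation (\ref{monA}), which is already contained in the quoted Proposition once one notes $m\wedge k=m$. The only point needing slight care is that (\ref{monA}) is exactly what licenses the cancellation in (\ref{monB}); without the containment $H(m)\prec H(k)$ the difference $Q$ would not in general be a projector, and one would instead have to fall back on the supermodularity of $\ell$.
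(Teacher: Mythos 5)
Your proof is correct and follows essentially the same route as the paper: both arguments rest on the identity ${\mathfrak P}(m){\mathfrak P}(k)={\mathfrak P}(m\wedge k)={\mathfrak P}(m)$ from Eq.(\ref{61}), deduce that ${\mathfrak P}(k)-{\mathfrak P}(m)$ is an orthogonal projector (you verify idempotency explicitly, which the paper leaves implicit), and conclude by non-negativity of ${\rm Tr}\{\rho_n[{\mathfrak P}(k)-{\mathfrak P}(m)]\}$. No gap; nothing further is needed.
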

\begin{proof}
From Eq.(\ref{61}), it follows that ${\mathfrak P}(m){\mathfrak P}(k)={\mathfrak P}(m\wedge k)={\mathfrak P}(m)$.
Therefore ${\mathfrak P}(k)-{\mathfrak P}(m)$ is a projector orthogonal to ${\mathfrak P}(m)$ and
\begin{eqnarray}
\ell (k|\rho _n)=\ell (m|\rho _n)+{\rm Tr}\{\rho _n[{\mathfrak P}(k)-{\mathfrak P}(m)]\}
\end{eqnarray}
where ${\rm Tr}\{\rho _n[{\mathfrak P}(k)-{\mathfrak P}(m)]\}$ is a non-negative number. 
This proves the lemma.
\end{proof}
\begin{proposition}\label{propo}
\mbox{}
\begin{itemize}
\item[(1)]
The $\ell (m|\rho _n)$  obey the relation
\begin{eqnarray}\label{560}
\ell(m_1\vee m_2|\rho _n)-\ell (m_1|\rho _n)-\ell (m_2|\rho _n)
+\ell (m_1\wedge m_2|\rho _n)=\sigma (m_1,m_2|\rho _n).
\end{eqnarray}
From this follows that they obey the supermodularity inequality of Eq.(\ref{pro1}).
\item[(2)]
\begin{eqnarray}\label{89}
\ell (m|\rho _n)+{\widetilde \ell} (\neg m|\rho _n)\le \ell (\neg \neg m|\rho _n)+{\widetilde \ell} (\neg m|\rho _n)\le 1.
\end{eqnarray}
\end{itemize}
\end{proposition}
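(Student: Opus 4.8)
The plan is to reduce both parts to statements about the trace of $\rho_n$ against the commuting projectors ${\mathfrak P}(\cdot)$, exploiting that these projectors realise the lattice operations through Eq.(\ref{61}) and the decomposition in Eq.(\ref{50Z}).

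For part (1) I would expand $\sigma(m_1,m_2|\rho_n)={\rm Tr}[\rho_n{\mathfrak S}(m_1,m_2)]$ by substituting the explicit form of ${\mathfrak S}(m_1,m_2)$ from Eq.(\ref{50Z}) and using linearity of the trace; this produces Eq.(\ref{560}) directly, so the equality is pure bookkeeping. The supermodularity inequality of Eq.(\ref{pro1}) then follows at once: the preceding Proposition states that ${\mathfrak S}(m_1,m_2)$ is a projector, and since $\rho_n$ is positive semidefinite one has $\sigma(m_1,m_2|\rho_n)={\rm Tr}[\rho_n{\mathfrak S}(m_1,m_2)]\ge 0$. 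I would point out that this is precisely where the space $S(m_1,m_2)$ of Remark \ref{vb} enters, its nonzero dimension off a chain being what allows strict inequality.

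For the first inequality in Eq.(\ref{89}) I would use the Heyting-algebra relation $m\prec\neg\neg m$ recorded in the Preliminaries: Lemma \ref{37} then gives $\ell(m|\rho_n)\le\ell(\neg\neg m|\rho_n)$, and adding the common term ${\widetilde\ell}(\neg m|\rho_n)$ to both sides yields the left inequality.

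The second inequality is the substantive step. Writing ${\widetilde\ell}(\neg m|\rho_n)={\rm Tr}[\rho_n({\mathfrak P}(\neg m)-{\mathfrak P}(1))]$ and combining with $\ell(\neg\neg m|\rho_n)$ gives $\ell(\neg\neg m|\rho_n)+{\widetilde\ell}(\neg m|\rho_n)={\rm Tr}[\rho_n({\mathfrak P}(\neg\neg m)+{\mathfrak P}(\neg m)-{\mathfrak P}(1))]$. The key --- and the main obstacle --- is to identify how $\neg\neg m$ and $\neg m$ sit in the divisor lattice. Since $\varpi(\neg\neg m)=\varpi(m)$ while $\varpi(\neg m)=\varpi(n)-\varpi(m)$, the two divisors have disjoint prime supports, hence are coprime, so $\neg\neg m\wedge\neg m=1$ and $\neg\neg m\vee\neg m=n$. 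Using $\neg\neg m\wedge\neg m=1$ in Eq.(\ref{r1}) identifies the bracketed operator as ${\mathfrak T}(\neg\neg m,\neg m)$, and by Eq.(\ref{50Z}) this projector obeys ${\mathfrak T}(\neg\neg m,\neg m)\preceq{\mathfrak P}(\neg\neg m\vee\neg m)={\mathfrak P}(n)={\bf 1}_n$. Taking the trace against $\rho_n$ then gives the bound $\le 1$. I expect the only delicate point to be the prime-support computation establishing coprimality; once that is secured, the operator inequality and the resulting trace estimate are routine.
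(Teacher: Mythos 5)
Your proposal is correct and follows essentially the same route as the paper: part (1) is the trace of the projector identity in Eq.(\ref{50Z}) together with positivity of $\rho_n$, and the first inequality of Eq.(\ref{89}) is Lemma \ref{37} applied to $m\prec\neg\neg m$. For the second inequality the paper simply specializes Eq.(\ref{pro1}) to $m_1=\neg m$, $m_2=\neg\neg m$; your operator-level bound ${\mathfrak T}(\neg\neg m,\neg m)\le{\mathfrak P}(n)={\bf 1}_n$ is the same statement in disguise, resting on the same coprimality facts $\neg m\wedge\neg\neg m=1$ and $\neg m\vee\neg\neg m=n$ that you verify via prime supports.
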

\begin{proof}
\mbox{}
\begin{itemize}
\item[(1)]
This follows immediately from Eq.(\ref{50Z}).
\item[(2)]
In the special case that $m_1=\neg m$ and $m_2=\neg \neg m$, Eq.(\ref{pro1}) reduces to
\begin{eqnarray}\label{560a}
1-\ell (\neg m|\rho _n)-\ell (\neg \neg m|\rho _n)
+\ell (1|\rho _n)\ge 0.
\end{eqnarray}
This proves one part of the inequality.
The second part, follows immediately from lemma \ref{37}, because $m\prec \neg \neg m$. 
\end{itemize}
\end{proof}

The supermodularity of $\ell(m|\rho _n)$ in Eq.(\ref{pro1}), is related to the term $\sigma (m_1,m_2|\rho _n)$
in Eq.(\ref{560}), which is related to the space $S(m_1,m_2)$, and also
to the fact that the group ${\mathbb Z}(m_1 \vee m_2)$ contains not only the elements of 
${\mathbb Z}(m_1)$ and ${\mathbb Z}(m_2)$, but also the elements of ${\mathbb Z}^*(m_1\vee m_2)$.
Within a chain $\sigma (m_1,m_2|\rho _n)=0$ for all density matrices, and the $\ell (m|\rho _n)$ 
obey Eq.(\ref{pro2}) like Kolmogorov probabilities.

\subsection{Example}\label{ex}
We consider the $\Lambda ({\bf \Sigma}_{18})$ which comprises of the subsystems of $\Sigma (18)$.
The projectors to these subsystems are
\begin{eqnarray}
&&{\mathfrak P}(1)=\ket{X_{18};0}\bra{X_{18};0}\nonumber\\
&&{\mathfrak P}(2)=\ket{X_{18};0}\bra{X_{18};0}+\ket{X_{18};9}\bra{X_{18};9}\nonumber\\
&&{\mathfrak P}(3)=\ket{X_{18};0}\bra{X_{18};0}+\ket{X_{18};6}\bra{X_{18};6}+\ket{X_{18};12}\bra{X_{18};12}\nonumber\\
&&{\mathfrak P}(6)=\sum _{\nu=0}^5\ket{X_{18};3\nu}\bra{X_{18};3\nu}\nonumber\\
&&{\mathfrak P}(9)=\sum _{\nu=0}^8\ket{X_{18};2\nu}\bra{X_{18};2\nu}\nonumber\\
&&{\mathfrak P}(18)={\bf 1}
\end{eqnarray}
In this case we have $3$ maximal contexts:
\begin{eqnarray}
{\bf \Sigma}_{18}(1)=\{\Sigma (18),\Sigma (9),\Sigma (3),\Sigma (1)\}\nonumber\\
{\bf \Sigma}_{18}(2)=\{\Sigma (18),\Sigma (6),\Sigma (3),\Sigma (1)\}\nonumber\\
{\bf \Sigma}_{18}(3)=\{\Sigma (18),\Sigma (6),\Sigma (2),\Sigma (1)\}
\end{eqnarray}
In $\Sigma (18)$ we consider the state
\begin{eqnarray}
\rho=\sum _{\nu=0}^{17}a_{\nu}\ket{X_{18};\nu}\bra{X_{18};\nu};\;\;\;\;\sum _{\nu=0}^{17}a_{\nu}=1;\;\;\;\;\;0\le a_{\nu} \le1,
\end{eqnarray}
We intentionally choose a mixed state with no off-diagonal elements, in order to emphasize that our arguments are not 
related to off-diagonal elements.
In table \ref{362} we give the $\ell(m|\rho )$ for all $m\in {\mathbb D}(18)$ (and also the $u(m|\rho )$ which is introduced later).

We next calculate the $\sigma (m_1,m_2)$ of Eq.(\ref{560}).
We take into into account the easily proved properties that $\sigma (1,m|\rho)=\sigma (18,m|\rho)=0$, and that if $m_1\prec m_2$ then
$\sigma (m_1,m_2|\rho)=0$. We find that 
\begin{eqnarray}
\sigma (9,6|\rho )&=&\ell(18|\rho)-\ell(9|\rho)-\ell(6|\rho)+\ell(3|\rho)\nonumber\\
&=&a_1+a_5+a_7+a_{11}+a_{13}+a_{17};\nonumber\\
\sigma (9,2|\rho)&=&\ell(18|\rho)-\ell(9|\rho)-\ell(2|\rho)+\ell(1|\rho)\nonumber\\
&=&a_1+a_3+a_5+a_{7}+a_{11}+a_{13}+a_{15}+a_{17};\nonumber\\
\sigma (2,3|\rho)&=&\ell(6|\rho)-\ell(2|\rho)-\ell(3|\rho)+\ell(1|\rho)=a_3+a_{15},
\end{eqnarray}
and that the rest $\sigma (m_1,m_2|\rho)=0$.
These values show that the $\ell (m|\rho)$ is a supermodular function (and therefore a superadditive function).

\section{Lower and upper quantum probabilities}

\subsection{The statement `belongs in $\Sigma (m)$' is not equivalent to 
`does not belong in $\neg {\widetilde \Sigma} (m)$'}

Comparison of Eqs.(\ref{76}),(\ref{1z}), for lower probabilities, with proposition \ref{propo} 
for quantum probabilities, shows that the  
$\ell (m|\rho _n)$ where $m\in {\mathbb D}(n)$,
has all the characteristics of a lower probability in the Dempster-Shafer theory. 
The upper probability is given by
\begin{eqnarray}\label{567}
u(m|\rho _n)=1-{\widetilde \ell} (\neg m|\rho _n)=1-{\ell}(\neg m|\rho _n)+{\ell}(1|\rho _n).
\end{eqnarray}
The ${\ell}(1|\rho _n)$ is added on the right hand side, so that $u(1|\rho _n)={\ell}(1|\rho _n)$.
For later use we also define the
\begin{eqnarray}
{\widetilde u}(m|\rho _n)=1-{\ell} (\neg m|\rho _n)=u(m|\rho _n)-{\ell}(1|\rho _n).
\end{eqnarray}
for which ${\widetilde u}(1|\rho _n)=0$.

Both $\ell (m|\rho _n)$ and $u(m|\rho _n)$ can be measured with the von Neumann measurement
\begin{eqnarray}\label{meas}
Q=\sum _{r=0}^{n-1} a_r\ket{X_n;r}\bra{X_n;r}.
\end{eqnarray}
We perform this measurement on many systems in the state $\rho _n$, and we count the number of times ${\mathfrak n}_r$ 
that the system will collapse into the state $\ket{X_n;r}$. Then
\begin{eqnarray}\label{41}
&&\ell (m|\rho _n)=\lim _{{\mathfrak n}_T\rightarrow \infty}
\frac{1}{{\mathfrak n}_T}\sum _{r\in {\cal L}}{\mathfrak n}_r;\;\;\;{\cal L}=\left \{0,\frac{n}{m},...,(m-1)\frac{n}{m}\right \}
\subseteq {\cal U}\nonumber\\
&&u (m|\rho _n)=\lim _{{\mathfrak n}_T\rightarrow \infty}
\frac{1}{{\mathfrak n}_T}\sum _{r\in {\cal U}}{\mathfrak n}_r;\;\;\;{\cal U}={\mathbb Z}(n)-\left \{\frac{n}{\neg m},...,(\neg m-1)\frac{n}{\neg m}
\right \}\nonumber\\
&&{\mathfrak n}_T=\sum _{s=0}^{n-1}{\mathfrak n}_s
\end{eqnarray}
In $\ell (m|\rho _n)$ we take $r\in {\cal L}$, which means that the collapsed state belongs entirely in $\Sigma (m)$ (as embedded into $\Sigma (n)$).
In $u (m|\rho _n)$ we take $r\in {\cal U}$, which means that the collapsed state does not belong in 
$\neg {\widetilde \Sigma} (m)={\widetilde \Sigma} (\neg m)$.
The statement `belongs in $\Sigma (m)$' is different from the statement `does not belong in $\neg {\widetilde \Sigma} (m)$',
and this is the reason for introducing lower and upper probabilities.
In contrast, in Kolmogorov's probability defined on the Boolean algebra associated with a powerset $2^{\Omega}$,
the statement `belongs to $A\subseteq \Omega$' is equivalent to the statement `does not belong to ${\overline A}=\Omega -A$'
(i.e., $q(A)=1-q({\overline A})$).

The difference between upper and lower probabilities is 
\begin{eqnarray}
&&d(m|\rho _n)=u(m|\rho _n)-\ell (m|\rho _n)={\rm Tr}[\rho_n{\mathfrak D}(m)];\;\;\;\;\;m\in {\mathbb D}(n)\nonumber\\
&&{\mathfrak D}(m)={\bf 1}_n-{\mathfrak P}(m)-{\widetilde {\mathfrak P}}(\neg m)={\bf 1}_n+{\mathfrak S}(m,\neg m)-{\mathfrak P}(m\vee \neg m)\nonumber\\
&&{\mathfrak D}(m){\mathfrak P}(m)=0;\;\;\;\;\;{\mathfrak D}(m)[{\bf 1}_n-{\mathfrak P}(m)]={\mathfrak D}(m).
\end{eqnarray}
In the Dempster terminology, $d(m|\rho _n)$ and ${\mathfrak D}(m)$ could be called `don't know' probability and
`don't know' projector, correspondingly.

The $d(m|\rho _n)$ can be calculated from the outcomes of the von Neumann measurement of Eq.(\ref{meas}), as follows:
\begin{eqnarray}
u (m|\rho _n)=\lim _{{\mathfrak n}_T\rightarrow \infty}
\frac{1}{{\mathfrak n}_T}\sum _{r}{\mathfrak n}_r;\;\;\;\;\;r\in {\cal U}-{\cal L}
\end{eqnarray}

\paragraph*{Ambiguity and multivaluedness:}
There are many probabilities between $\ell (m|\rho _n)$ and $u (m|\rho _n)$ which can be calculated using the 
outcomes ${\mathfrak n}_r$ from the von Neumann measurement of Eq.(\ref{meas}).
For example, in Eq.(\ref{41}) we can use $r\in S$ where
\begin{eqnarray}
S=\{0,\frac{n}{k},...,(k-1)\frac{n}{k}\};\;\;\;\;\;m\prec k\prec \neg \neg m;\;\;\;\;\;{\cal L}\subseteq S\subseteq {\cal U}
\end{eqnarray}
All these measurements show the `Dempster multivaluedness' \cite{DS1} in the present formalism.
For each subsystem $\Sigma (m)$ we have an interval of probabilities $[\ell (m|\rho _n), u (m|\rho _n)]$
which shows the existence of ambiguity.
This is an extra level of incertitude which is different from the 
usual uncertainties of non-commuting variables.

\subsection{Lower and upper quantum probabilities as capacities in ${\mathbb D}(n)$}

\begin{proposition}
Let $A=\{m_1,...,m_r\}\subseteq {\mathbb D}(n)$.
The lower and upper probabilities  with
\begin{eqnarray}
&&\ell(A|\rho _n)=\ell (m_1\vee...\vee m_r|\rho _n)\nonumber\\
&&u(A|\rho _n)=u(m_1\vee...\vee m_r|\rho _n)\nonumber\\
&&\ell(\emptyset|\rho _n)=u(\emptyset|\rho _n)=0
\end{eqnarray}
are capacities in ${\mathbb D}(n)$.
\end{proposition}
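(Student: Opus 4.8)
The plan is to verify the three defining properties of a capacity on the powerset $2^{\mathbb{D}(n)}$ (with $\Omega = \mathbb{D}(n)$): the value $0$ at $\emptyset$, the value $1$ at $\mathbb{D}(n)$, and monotonicity under $\subseteq$, doing this separately for the two set functions $\ell(\cdot|\rho_n)$ and $u(\cdot|\rho_n)$. The organising observation is that the assignment $A \mapsto m_A := \bigvee_{m\in A} m$ (the LCM of the divisors in $A$, which is again a divisor of $n$, and is well defined since $\vee$ is associative and commutative) converts the Boolean order $\subseteq$ on $2^{\mathbb{D}(n)}$ into the divisibility order $\prec$ on $\mathbb{D}(n)$. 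Once this translation is in place, the required relations for the set functions are pulled back from the single-divisor statements already proved in Lemma~\ref{37} and Proposition~\ref{propo}.

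Normalisation at the empty set holds by fiat, since $\ell(\emptyset|\rho_n)=u(\emptyset|\rho_n)=0$ is part of the definition. For the top element I note that $n\in\mathbb{D}(n)$ and every divisor divides $n$, so $m_{\mathbb{D}(n)}=n$; hence $\ell(\mathbb{D}(n)|\rho_n)=\ell(n|\rho_n)={\rm Tr}[\rho_n{\mathfrak P}(n)]={\rm Tr}[\rho_n{\bf 1}_n]=1$, using ${\mathfrak P}(n)={\bf 1}_n$. For the upper probability I use Eq.~(\ref{567}): $u(n|\rho_n)=1-{\widetilde\ell}(\neg n|\rho_n)$. From the definition of negation in Section~\ref{AAS} one computes $\neg n=1$ (the empty product over $\varpi(n)-\varpi(n)$), and then ${\widetilde{\mathfrak P}}(1)={\mathfrak P}(1)-{\mathfrak P}(1)=0$, so ${\widetilde\ell}(1|\rho_n)=0$ and $u(\mathbb{D}(n)|\rho_n)=1$.

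For monotonicity, suppose $A\subseteq B\subseteq\mathbb{D}(n)$, so that $m_A\prec m_B$ because the LCM of $A$ divides the LCM of $B$. The lower probability is then immediate from Lemma~\ref{37}: $\ell(A|\rho_n)=\ell(m_A|\rho_n)\le\ell(m_B|\rho_n)=\ell(B|\rho_n)$. For the upper probability I first record that $\neg$ is order-reversing on $\Lambda[\mathbb{D}(n)]$: if $m_A\prec m_B$ then $\varpi(m_A)\subseteq\varpi(m_B)$, whence $\varpi(n)-\varpi(m_B)\subseteq\varpi(n)-\varpi(m_A)$ and so $\neg m_B\prec\neg m_A$. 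Consequently ${\widetilde{\mathfrak P}}(\neg m_B)\le{\widetilde{\mathfrak P}}(\neg m_A)$ as projectors, which gives ${\widetilde\ell}(\neg m_B|\rho_n)\le{\widetilde\ell}(\neg m_A|\rho_n)$ by taking the trace against $\rho_n\ge 0$, and via Eq.~(\ref{567}) this rearranges to $u(A|\rho_n)\le u(B|\rho_n)$. Along the way I would also confirm that both functions take values in $[0,1]$: $\ell(m|\rho_n)$ is the trace of $\rho_n$ against a projector, and $u(m|\rho_n)=1-{\widetilde\ell}(\neg m|\rho_n)$ with ${\widetilde\ell}(\neg m|\rho_n)\in[0,1]$ since ${\widetilde{\mathfrak P}}(\neg m)$ is itself a projector.

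I expect the only non-routine step to be the monotonicity of $u$: unlike $\ell$, it does not follow directly from the single-divisor monotonicity of Lemma~\ref{37}, because the definition of $u$ interposes the negation $\neg m$ and the truncated projector ${\widetilde{\mathfrak P}}$. The crux is therefore the order-reversing property of $\neg$, after which the two order reversals (one from $\neg$, one from the minus sign in Eq.~(\ref{567})) compose to restore the correct direction of the inequality. Everything else reduces to the well-definedness of the join over a finite set and to elementary projector and trace monotonicity.
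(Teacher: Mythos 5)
Your proof is correct and takes essentially the same route as the paper's: reduce monotonicity under $\subseteq$ to divisibility of the joins $m_A\prec m_B$, apply Lemma~\ref{37} for $\ell$, and use the order-reversing property of $\neg$ combined with Eq.~(\ref{567}) to transfer monotonicity to $u$. You additionally spell out the normalisation at $\emptyset$ and at $\mathbb{D}(n)$ and the $[0,1]$ range, which the paper leaves implicit, but this is extra bookkeeping rather than a different argument.
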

\begin{proof}
Let $m_A,m_B$ be the disjunctions (least common multipliers) of all elements in the sets $A\subseteq B\subseteq {\mathbb D}(n)$.
Then $m_A\prec m_B$ and therefore ${\ell}(m_A|\rho _n)\le {\ell}(m_B|\rho _n)$ (lemma \ref{37}).
This completes the proof for lower probabilities.

If $m\prec k$, then  $\neg k\prec \neg m$, and therefore $u(m|\rho _n)\le u(k|\rho _n)$.
This shows that lemma \ref{37} holds for upper probabilities, also.
Then the proof of the proposition for upper probabilities, is similar to the one above for lower probabilities.
\end{proof}
In analogy to Eq.(\ref{C}), we introduce the quantity
\begin{eqnarray}
{\mathfrak L}(m;k|\rho_n)=\ell(m \vee k|\rho _n)-\ell(k|\rho _n)
\end{eqnarray}
This quantifies the `added value' to the subsystem $\Sigma (m)$, if it
combines with the system $\Sigma (k)$, into the larger system $\Sigma (m\vee k)$ (see remark \ref{vb}).
In the case of coprime $k,m$, the ${\mathfrak L}(k;m|\rho_n)-{\widetilde \ell}(m|\rho_n)$ is a measure of the 
non-additivity of the lower probabilities.

If $\Sigma (m)$ is a subsystem of $\Sigma (k)$ (i.e., $m\prec k$), then ${\mathfrak L}(m;k|\rho_n)=0$.
In this case adding $\Sigma (m)$ to $\Sigma (k)$ does not have any effect, because $\Sigma (m)$ is already a part of $\Sigma (k)$. 
Also
\begin{eqnarray}
&&{\mathfrak L}(m;k\wedge m|\rho_n)=\ell(m|\rho _n)-\ell(k\wedge m|\rho _n)\nonumber\\
&&{\mathfrak L}(m;k|\rho_n)+{\mathfrak L}(m;k\wedge m|\rho _n)=\ell(m\vee k|\rho _n)-\ell (m|\rho _n)-\ell (k|\rho _n)
+\ell(m\wedge k|\rho _n).
\end{eqnarray}
Therefore the ${\mathfrak L}(m;k|\rho_n)+{\mathfrak L}(m;k\wedge m|\rho _n)$ is a measure of the deviation from the modularity property of Eq.(\ref{pro2}).

\paragraph*{Example:} We consider the example discussed earlier in section \ref{ex}.
In table \ref{362} we give the lower and upper probabilities $\ell(m|\rho )$ and $u(m|\rho )$ for all $m\in {\mathbb D}(18)$.
Using these values we calculate as an example, the ${\mathfrak L}(2;3|\rho)$ and the ${\mathfrak L}(2;3|\rho)-{\widetilde \ell}(2|\rho)$. We find
\begin{eqnarray}
&&{\mathfrak L}(2;3|\rho)=\ell (6|\rho)-\ell (3|\rho)=a_3+a_9+a_{15}\nonumber\\
&&{\mathfrak L}(2;3|\rho)-{\widetilde \ell}(2|\rho)=a_3+a_{15}.
\end{eqnarray}
The ${\mathfrak L}(2;3|\rho)-{\widetilde \ell}(2|\rho)$ is an example of the non-additive nature of the probabilities ${\ell}(m|\rho)$.

\subsection{Properties of the lower and upper quantum probabilities}

We first point out that
\begin{eqnarray}
u(1|\rho _n)={\ell}(1|\rho _n);\;\;\;\;\;u(n|\rho _n)={\ell}(n|\rho _n)=1.
\end{eqnarray}
We next introduce the
\begin{eqnarray}
&&{\overline {\ell}}(m|\rho _n)={\ell}(\neg m|\rho _n)-{\ell}(1|\rho _n)={\widetilde \ell} (\neg m|\rho _n)\nonumber\\
&&{\overline u}(m|\rho _n)=u(\neg m|\rho _n)-u(1|\rho _n)=1-\ell (\neg \neg m)
\end{eqnarray}
They are the analogues of $\ell (\overline A)$ and $u(\overline A)$ in section \ref{A}.

\begin{proposition}\label{propos}
\mbox{}
\begin{itemize}
\item[(1)]
The upper probabilities $u(m|\rho _n)$  obey the relation
\begin{eqnarray}\label{520}
u(m_1\vee m_2|\rho _n)-u(m_1|\rho _n)-u(m_2|\rho _n)
+u(m_1\wedge m_2|\rho _n)=-\sigma (\neg m_1,\neg m_2|\rho _n).
\end{eqnarray}

\item[(2)]
\begin{eqnarray}\label{P1}
\ell(m|\rho _n)+{\overline \ell}(m|\rho _n)\le 1 \le u(m|\rho _n)+{\overline u}(m|\rho _n)
\end{eqnarray}
\item[(3)]
\begin{eqnarray}\label{P2}
&&u(m|\rho _n)=u (\neg \neg m|\rho _n)\nonumber\\
&&\ell (\neg \neg m|\rho _n)-\ell (m|\rho _n)=u(m|\rho _n)-u (\neg m|\rho _n)\ge 0.
\end{eqnarray}
\item[(4)]
If $m\prec k\prec \neg \neg m$  then $\ell (m) \le \ell (k) \le u(m)$.
\item[(5)]
If
\begin{eqnarray}\label{bnm}
n=p_1^{e_1(n)}...p_N^{e_N(n)};\;\;\;\;m=p_1^{e_1(m)}...p_N^{e_N(m)};\;\;\;\;1\le e_i(m)\le e_i(n)
\end{eqnarray}
them $u(m)=1$
\end{itemize}
\end{proposition}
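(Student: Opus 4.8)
The whole proposition rests on the single definition $u(m|\rho_n)=1-\ell(\neg m|\rho_n)+\ell(1|\rho_n)$ from Eq.(\ref{567}), which presents the upper probability as the lower probability precomposed with negation and shifted by $\ell(1|\rho_n)$. The plan is therefore to push every statement about $u$ back to a statement about $\ell$, and then to close it using the results already in hand: the supermodularity identity Eq.(\ref{560}), its specialisation Eq.(\ref{560a}), the bound Eq.(\ref{89}), and the monotonicity Lemma \ref{37}. Before doing so I would record the two algebraic facts about the negation of $\Lambda[{\mathbb D}(n)]$ that drive the computations. Because $\neg k=\prod_{p\in\varpi(n)-\varpi(k)}p^{e_p(n)}$ depends only on the prime support $\varpi(k)$, on which the lattice reduces to the Boolean algebra of subsets of $\varpi(n)$, both De Morgan laws hold here, namely $\neg(m_1\vee m_2)=\neg m_1\wedge\neg m_2$ and $\neg(m_1\wedge m_2)=\neg m_1\vee\neg m_2$, and so do the triple-negation law $\neg\neg\neg m=\neg m$ and the inclusion $m\prec\neg\neg m$.

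Given these, part (1) is a direct expansion: substituting the definition of $u$ into the left-hand alternating sum, the four constants and the four copies of $\ell(1|\rho_n)$ cancel, leaving $-\ell(\neg(m_1\vee m_2)|\rho_n)+\ell(\neg m_1|\rho_n)+\ell(\neg m_2|\rho_n)-\ell(\neg(m_1\wedge m_2)|\rho_n)$; applying the two De Morgan laws turns this into $-[\ell(\neg m_1\vee\neg m_2|\rho_n)-\ell(\neg m_1|\rho_n)-\ell(\neg m_2|\rho_n)+\ell(\neg m_1\wedge\neg m_2|\rho_n)]$, which is $-\sigma(\neg m_1,\neg m_2|\rho_n)$ by Eq.(\ref{560}). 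For part (2) the left inequality is precisely Eq.(\ref{89}) once one notes $\overline\ell(m|\rho_n)=\widetilde\ell(\neg m|\rho_n)$; for the right one I would compute $u(m|\rho_n)+\overline u(m|\rho_n)=2-\ell(\neg m|\rho_n)-\ell(\neg\neg m|\rho_n)+\ell(1|\rho_n)$ and observe that $u+\overline u\ge 1$ is exactly the rearrangement of Eq.(\ref{560a}).

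Part (3) splits the same way: the first identity is immediate, since $u(\neg\neg m|\rho_n)=1-\ell(\neg\neg\neg m|\rho_n)+\ell(1|\rho_n)=1-\ell(\neg m|\rho_n)+\ell(1|\rho_n)=u(m|\rho_n)$ by triple negation; the second is obtained by writing both upper probabilities through the definition and reading off the difference in $\ell$-values, with $\ell(\neg\neg m|\rho_n)-\ell(m|\rho_n)\ge0$ following from $m\prec\neg\neg m$ and Lemma \ref{37}. Part (4) is Lemma \ref{37} used twice: $m\prec k$ gives $\ell(m|\rho_n)\le\ell(k|\rho_n)$, while $k\prec\neg\neg m$ gives $\ell(k|\rho_n)\le\ell(\neg\neg m|\rho_n)$, and the chain closes because $u(m|\rho_n)-\ell(\neg\neg m|\rho_n)=1-\ell(\neg m|\rho_n)-\ell(\neg\neg m|\rho_n)+\ell(1|\rho_n)\ge0$ is again Eq.(\ref{560a}). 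Finally part (5) is the cleanest: the hypothesis $1\le e_i(m)$ for every $i$ says $\varpi(m)=\varpi(n)$, so the index set in $\neg m=\prod_{p\in\varpi(n)-\varpi(m)}p^{e_p(n)}$ is empty and $\neg m=1$; hence $u(m|\rho_n)=1-\ell(1|\rho_n)+\ell(1|\rho_n)=1$.

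I expect the only genuinely non-routine step to be the verification of the De Morgan law $\neg(m_1\wedge m_2)=\neg m_1\vee\neg m_2$ needed in part (1), since this is the direction that fails in a general Heyting algebra and holds here only because $\neg$ sees nothing but the prime support $\varpi(\cdot)$; everything else is bookkeeping that reduces $u$ to $\ell$ and then quotes Eq.(\ref{560}), Eq.(\ref{560a}), Eq.(\ref{89}) and Lemma \ref{37}.
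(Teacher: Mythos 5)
Your proposal is correct and follows essentially the same route as the paper: every part is reduced to statements about $\ell$ via the definition $u(m|\rho_n)=1-\ell(\neg m|\rho_n)+\ell(1|\rho_n)$ and then closed with Eqs.~(\ref{560}), (\ref{560a}), (\ref{89}) and Lemma~\ref{37}; your explicit verification of both De Morgan laws in $\Lambda[{\mathbb D}(n)]$ (valid because $\neg$ depends only on the prime support) usefully fills in a step the paper leaves implicit in part (1). The only divergence is in part (4), where the paper applies supermodularity afresh to the pair $(\neg m,k)$ together with $\ell(\neg m\vee k)\le 1$, while you pass through $\ell(k)\le\ell(\neg\neg m)\le u(m)$ using monotonicity and the already-derived Eq.~(\ref{560a}); both arguments are equally short and valid.
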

\begin{proof}
\mbox{}
\begin{itemize}
\item[(1)]
This follows from Eqs(\ref{560}), (\ref{567}).
\item[(2)]
This is proved using Eq.(\ref{89}).
\item[(3)]
Using Eq.(\ref{567}), we get
\begin{eqnarray}
u(\neg \neg m|\rho _n)=1-{\ell} (\neg \neg \neg m|\rho _n)+\ell(1|\rho _n)=1-{\ell} (\neg m|\rho _n)+\ell(1|\rho _n)=u(m|\rho _n).
\end{eqnarray}
Also 
\begin{eqnarray}
u(m|\rho _n)-u (\neg m|\rho _n)&=&[1-{\ell} (\neg m|\rho _n)+\ell(1|\rho _n)]-[1-{\ell} (\neg \neg m|\rho _n)+\ell(1|\rho _n)]\nonumber\\&=&
\ell (\neg \neg m|\rho _n)-\ell (m|\rho _n)
\end{eqnarray}
The right hand side is non-negative according to proposition \ref{37} and the fact that $m\prec \neg \neg m$.
\item[(4)]
Eq.(\ref{pro1}) with $m_1=\neg m$ and $m_2=k$ gives
\begin{eqnarray}\label{C1}
\ell(\neg m|\rho _n)+\ell(k|\rho _n)\le \ell(\neg m \vee k|\rho _n)+\ell(\neg m\wedge k|\rho _n)
\end{eqnarray}
But from $k\prec \neg\neg m$ it follows that $k\wedge m\prec \neg\neg m\wedge m=1$ and therefore $\ell(\neg m\wedge k|\rho _n)=\ell(1|\rho _n)$.
We rewrite Eq.(\ref{C1}) as
 \begin{eqnarray}
\ell(\neg m|\rho _n)+\ell(k|\rho _n)\le \ell(\neg m \vee k|\rho _n)+\ell(1|\rho _n)\le 1+\ell(1|\rho _n)
\end{eqnarray}
and from this follows that $\ell (k) \le u(m)$. Also, since $m\prec k$ we get $\ell (m) \le \ell (k)$ (proposition \ref{37}).
\item[(5)]
From Eq.(\ref{bnm}), it follows that $\neg m=1$ and therefore $u(m)=1$.
\end{itemize}
\end{proof}
\begin{remark}
\mbox{}
\begin{itemize}
\item[(1)]
The lower and upper probabilities in section \ref{A}, are defined on a Boolean algebra,  and therefore
${\overline {\overline A}}=A$, which implies that $\ell ({\overline {\overline A}})=\ell (A)$ and $u({\overline {\overline A}})=u(A)$. 
The analogue of this in our case which is a Heyting algebra, is Eq.(\ref{P2}).
\item[(2)]
From Eq.(\ref{520}), it follows that the upper probabilities obey the submodularity inequality 
\begin{eqnarray}
u(m_1\vee m_2|\rho _n)-u(m_1|\rho _n)-u(m_2|\rho _n)
+u(m_1\wedge m_2|\rho _n)\le 0.
\end{eqnarray}
\end{itemize}
\end{remark}

\section{Discussion}

We have considered the distributive lattice $\Lambda ({\bf \Sigma} _n)$ of subsystems of $\Sigma (n)$.
We have shown that with respect to this lattice, the lower and upper probabilities of 
the Dempster-Shafer approach, describe very well the 
quantum probabilities $\ell(m|\rho _n)$, for the following reasons:
\begin{itemize}
\item
For Kolmogorov probabilities $q(A)=1-q(\overline A)$ (Eq.(\ref{21})), but for
lower probabilities $\ell (A)$ is different from the $1-\ell (\overline A)$ (Eq.(\ref{1z})). 
The latter fits with the fact that in quantum systems 
`belongs in $\Sigma (m)$' is not the same as `does not belong in $\neg {\widetilde \Sigma} (m)$'.

\item
Kolmogorov probabilities satisfy the modularity equality of Eq.(\ref{pro2}),
but lower probabilities satisfy the supermodularity inequality of Eq.(\ref{76}).
The latter fits with the fact that the
quantum probabilities $\ell(m|\rho _n)$ satisfy the supermodularity inequality of Eq.(\ref{pro1}).

\item
There is multivaluedness and ambiguity (extra level of uncertainty, beyond the one associated with non-commuting variables) 
in quantum mechanics. The Dempster-Shafer theory is designed to describe similar situations in the classical world, and in this paper we applied it to the quantum world.
\end{itemize}
There is a long history of attempts to use more general (than Kolmogorov)
probabilistic theories in quantum mechanics\cite{G00,G0,G1,G2,G3,G4,G5}.
Operational approaches and convex geometry methods have been studied in \cite{CO1,CO2,CO3,CO4,CO5}.
Fuzzy phase spaces have been studied in \cite{FF1,FF2,FF3}.
Test spaces have been studied in \cite{TE1,TE2}.
Category theory methods have been studied in \cite{CA1,CA2}.
Topos theory methods have been used in \cite{TOP1,TOP2}.
In this paper we have used the Dempster-Shafer theory, for quantum probabilities in the distributive lattice of subsystems.
The Dempster-Shafer theory, for quantum probabilities in the Birkhoff-von Neumann orthomodular lattice of subspaces will be discussed 
elsewhere\cite{vv}

\newpage

\begin{table}
\caption{The project marks for four students $S_1$, $S_2$, $S_3$, $S_4$}
\centering
\begin{tabular}{|c||c|c|c|c|}\hline
$S_1$& $60$& $65$& $72$&\\ \hline
$S_2$& $70$& $72$& &\\ \hline
$S_3$& $61$& $65$& $68$&\\ \hline
$S_4$& $50$& $55$& $58$&$62$\\ \hline
\end{tabular}\label{360}

\vspace{0.7cm}

\caption{The lower and upper probabilities corresponding to the sets $A_1$, $A_2$, $A_1\cup A_2$, $A_3$ of Eq.(\ref{79})}
\centering
\begin{tabular}{|c||c|c|c|c|}\hline
&$A_1$& $A_2$& $A_1\cup A_2$& $A_3$\\ \hline\hline
$\ell (A_i)$& $1/4$& $1/4$& $3/4$&$1/4$\\ \hline
$u(A_i)$& $3/4$& $1/2$& $1$&$1$\\ \hline
\end{tabular}\label{361}

\vspace{0.7cm}

\caption{The lower and upper probabilities for example \ref{ex}}
\centering
\begin{tabular}{|c||c|c|}\hline
$m$& $\ell (m|\rho )$& $u(m|\rho )$\\ \hline\hline
$1$& $a_0$& $a_0$ \\ \hline
$2$& $a_0+a_9$& $a_0+\sum _{\nu=0}^8 a_{2\nu+1}$ \\ \hline
$3$& $a_0+a_6+a_{12}$& $\sum a_{\nu};\;\;\nu\ne 9$ \\ \hline
$6$& $\sum _{\nu=0}^5 a_{3\nu}$& $1$ \\ \hline
$9$&  $\sum _{\nu=0}^8 a_{2\nu}$& $\sum a_{\nu} ;\;\;\nu\ne 9$\\ \hline
$18$& $1$& $1$ \\ \hline
\end{tabular}\label{362}

\end{table}

\end{document}